\newtheorem{theorem}{Theorem}
\newtheorem{lemma}{Lemma}
\newtheorem{corollary}[lemma]{Corollary}
\title{Quasipolynomiality of the Smallest Missing Induced Subgraph}
\author{David Eppstein,\thanks{Computer Science Department, University of California, Irvine, eppstein@uci.edu. Research supported in part by NSF grant CCF-2212129.}
\quad Andrea Lincoln,\thanks{Boston University. Research performed while the author was at MIT and the University of California, Berkeley.}
\quad  and \quad Virginia Vassilevska Williams\thanks{EECS and CSAIL, Massachusetts Institute of Technology, virgi@mit.edu. Research supported in part by NSF Grant CCF-2129139.}}
\date{ }
\begin{document}

\maketitle
\begin{abstract}
We study the problem of finding the smallest graph that does not occur as an induced subgraph of a given graph. This missing induced subgraph has at most logarithmic size and can be found by a brute-force search, in an $n$-vertex graph, in time $n^{O(\log n)}$. We show that under the Exponential Time Hypothesis this quasipolynomial time bound is optimal. We also consider variations of the problem in which either the missing subgraph or the given graph comes from a restricted graph family; for instance, we prove that the smallest missing planar induced subgraph of a given planar graph can be found in polynomial time.
\end{abstract}

\section{Introduction}

In recent years, the older conventional wisdom that all natural computational problems have time either polynomial (or better) or exponential (or worse) has frayed, with the discovery of several natural problems whose best known time complexity is \emph{quasipolynomial}, of the form $n^{f(n)}$ for a polylogarithmically bounded function $f$. Of course, it is possible to obtain quasipolynomial time bounds from parameterized time bounds, by an appropriate choice of parameter, but instead we seek problems in which quasipolynomiality arises naturally from an unparameterized problem.
Such problems for which the best known time is quasipolynomial include:
\begin{itemize}
\item Finding minimum dominating sets in tournaments~\cite{MegVis-TCS-88}
\item Computing the Vapnik--Chervonenkis dimension of a family of sets~\cite{LinManRiv-IC-91,Shi-TCS-95,PapYan-JCSS-96}. Approximating VC dimension up to any constant factor \cite{Manurangsi23}.
\item Constructing $\epsilon$-approximate Nash equilibria, for constant $\varepsilon$~\cite{LipMarMeh-EC-03}.
\item Graph isomorphism and graph canonization~\cite{Bab-STOC-19}.
\item Maximum independent set in unit ball graphs in the hyperbolic plane~\cite{Kis-SODA-20}.
\end{itemize}
More strongly, in some of these cases, the form of the quasipolynomial time bound has been precisely determined (up to constant factors in the exponent) under the \emph{Exponential Time Hypothesis} (ETH) of Impagliazzo and Paturi~\cite{ImpPat-CCC-99}, the assumption that $3$-SAT cannot be solved in time $2^{o(n)}$. For instance, for both $\epsilon$-approximate Nash equilibria and hyperbolic unit ball independent sets, the upper bound is $n^{O(\log n)}$, while ETH implies the impossibility of improving this bound to $n^{o(\log n)}$~\cite{BraKunWei-SODA-15,Kis-SODA-20}.

In this work, we find another natural example of a problem that has both a quasipolynomial upper bound and a matching  lower bound derived from the Exponential Time Hypothesis. Our problem is the \emph{Smallest Missing Induced Subgraph}: given an undirected graph $G$ as input, find another undirected graph $H$, with as few vertices as possible, such that $H$ is not isomorphic to an induced subgraph of $G$. As we show, this problem can be solved in time $n^{O(\log n)}$ in $n$-vertex graphs~$G$ (\cref{thm:find-smis}), and improving this to time $n^{o(\log n)}$ would contradict ETH (\cref{thm:eth-hard}).

We also consider variations of the problem in which either the missing subgraph or the given graph comes from a restricted graph family. For instance, the smallest missing planar induced subgraph, in a given graph $G$, measures the quality of $G$ as a \emph{universal graph} for planar graphs and as the basis for adjacency-labeling schemes for planar graphs~\cite{DujEspGav-JACM-21}. We prove that the smallest missing planar induced subgraph of a given planar graph can be found in polynomial time (\cref{thm:planar-missing}).

\section{Upper bounds}

The Smallest Missing Induced Subgraph can be found by exactly counting the number of induced subgraphs of each type, of increasing size, until finding a size for which one of the counts is zero. Related problems of counting all types of subgraphs, exactly or approximately, have long been studied~\cite{DukLefRod-SICOMP-95,KloKraMul-IPL-00,RibParSil-CS-22}, and have recently gained popularity in bioinformatics, where small induced subgraphs are called ``graphlets''~\cite{Prz-BI-07,HocDem-BI-14,AhmNevRos-ICDM-15}. A naïve counting algorithm can solve this problem simply by enumerating all ordered $k$-tuples of vertices, determining the type of subgraph each $k$-tuple induces, and incrementing a counter for that type of subgraph.

In more detail:
\begin{itemize}
\item We begin with $k=2$, the smallest possible size of a missing induced subgraph in a graph with two or more vertices.
\item A $k$-vertex subgraph can be represented by a binary value with $\tbinom{k}{2}$ bits, representing (the lower half of) its adjacency matrix. We use these binary values as indices into an array of counters, initially all zero. We consider the vertices of the subgraphs to be labeled, so that permutations of the vertices are considered as different subgraphs.
\item We loop through all ordered $k$-tuples of vertices of the given graph. For each $k$-tuple, we can construct the binary value representing its adjacency matrix in average $O(k)$ time per $k$-tuple, by updating it from the previous $k$-tuple in the enumeration. After this step, we increment the counter indexed by that binary value. Note that by going through every $k$-tuple in the graph we are incrementing at least once for every possible re-labeling of each subgraph on $k$ nodes. In other words, we can not `miss' a valid labeling as we are going through all $n^k$ possible combinations of nodes. 
\item We loop through all cells of the array of counters, searching for one that is zero. If found, we decode it and output it as a missing subgraph; otherwise, we increment $k$ and repeat the previous steps.
\end{itemize}

The time for each $k$ is $O(kn^k+2^{k(k-1)/2})$. As this increases superexponentially with $k$, it is dominated by the last step, the one in which we find a missing induced subgraph.

\begin{lemma}
\label{lem:cutoff}
This algorithm succeeds for some $k\le 2\log_2n + 2$.
\end{lemma}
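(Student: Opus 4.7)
\medskip

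\noindent\textbf{Proof plan.} The plan is a direct pigeonhole argument comparing the number of distinct labeled graphs on $k$ vertices against the number of ordered $k$-tuples of vertices of $G$. First I would observe the key correspondence justifying the algorithm: if an unlabeled graph $H$ on $k$ vertices is an induced subgraph of $G$ via some vertex set $\{v_1,\dots,v_k\}$, then every one of the (up to) $k!$ orderings of that set produces an ordered $k$-tuple whose adjacency matrix is exactly the corresponding labeled version of $H$. Consequently, a counter for a labeled $k$-vertex graph $H'$ equals zero if and only if the underlying unlabeled graph of $H'$ does not occur as an induced subgraph of $G$; hence the algorithm succeeds at parameter $k$ precisely when at least one counter cell is zero.

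Next I would count. The number of labeled graphs on $k$ vertices equals $2^{\binom{k}{2}} = 2^{k(k-1)/2}$, since each of the $\binom{k}{2}$ possible edges is independently present or absent. On the other hand, the total number of counter increments across all cells is at most $n^k$, the number of ordered $k$-tuples of vertices of $G$. By pigeonhole, if
\[
2^{k(k-1)/2} \;>\; n^k,
\]
then some cell is never incremented, and the algorithm terminates successfully at this value of $k$.

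Finally I would solve this inequality: taking logarithms gives $k(k-1)/2 > k\log_2 n$, i.e. $k > 2\log_2 n + 1$. Therefore any integer $k$ with $k \ge 2\log_2 n + 2$ suffices, and in particular the smallest such $k$ is at most $2\log_2 n + 2$, completing the proof. There is no real obstacle here; the only thing to be careful about is the counting convention for $k$-tuples (ordered, possibly with repeats among the $n^k$ tuples, versus only the $n!/(n-k)!$ tuples with distinct vertices), but either bound is at most $n^k$, so the pigeonhole step is unaffected.
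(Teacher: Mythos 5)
Your proof is correct and follows the same pigeonhole argument as the paper: compare $2^{\binom{k}{2}}$ counter cells against at most $n^k$ increments, observe that the inequality $2^{k(k-1)/2} > n^k$ forces a zero cell, and solve for $k$. The only minor quibble is that your ``if and only if'' is really only an ``only if'' (a $k$-tuple with repeated vertices could increment a counter without witnessing an induced subgraph), but this is the direction needed for correctness and the pigeonhole count is unaffected, as you note.
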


\begin{proof}
If it ever reaches $k > 2\log_2n + 1$, the number of array cell counters will be
\[ 2^{\binom{k}{2}} = 2^{k(k-1)/2} > 2^{k\log_2 n} = n^k. \]
There are more counters than increment steps, and some counter will remain unincremented. Therefore, the algorithm must necessarily terminate whenever $k$ becomes this large. Since $k$ is increased by one in each iteration, the value at which it exceeds this threshold is at most $2\log_2n + 2$.
\end{proof}

\begin{corollary}
\label{cor:smis-size}
In any $n$-vertex graph, the smallest missing induced subgraph has at most $2\log_2n + 2$ vertices.
\end{corollary}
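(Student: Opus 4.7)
The proof is essentially immediate from \cref{lem:cutoff}, so the plan is simply to extract the existential content from the algorithmic statement. The algorithm described in the preceding paragraphs, when it ``succeeds'' at a given value of $k$, does so by exhibiting a concrete labeled graph on $k$ vertices whose counter is zero, i.e., a graph on $k$ vertices that does not occur as an induced subgraph of $G$ under any labeling. Hence the success value of $k$ is an upper bound on the size of the smallest missing induced subgraph. Combining this observation with the bound $k \le 2\log_2 n + 2$ from \cref{lem:cutoff} yields the corollary.

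If I wanted to present the argument without reference to the algorithm, I would instead give the equivalent pigeonhole phrasing directly: there are $2^{\binom{k}{2}}$ labeled graphs on $k$ vertices, whereas $G$ has only $n^k$ ordered $k$-tuples of vertices, each of which induces exactly one labeled graph on $k$ vertices. Whenever $2^{\binom{k}{2}} > n^k$, some labeled $k$-vertex graph is missed, and the smallest $k$ for which this inequality holds is at most $2\log_2 n + 2$ by the same calculation as in \cref{lem:cutoff}. Either phrasing gives a one- or two-sentence proof; there is no real obstacle since the content is essentially a restatement of the lemma, and the main stylistic choice is whether to cite \cref{lem:cutoff} directly or to re-derive the pigeonhole bound in situ.
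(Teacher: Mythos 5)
Your proposal is correct and matches the paper exactly: the corollary is stated without a separate proof precisely because it is the existential content of \cref{lem:cutoff}, and your observation that the algorithm's success at value $k$ certifies a missing $k$-vertex induced subgraph is the intended reading. The alternative direct pigeonhole phrasing you sketch is also just a restatement of the lemma's own proof.
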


Constructions of Alon~\cite{Alo-GFA-17} and Alstrup, Kaplan, Thorup, and Zwick~\cite{AlsKapTho-SIDMA-19} produce graphs in which the smallest missing induced subgraph has $2\log_2n - O(1)$ vertices, showing that this bound is nearly tight. By plugging this value of $k$ into the time bound for the algorithm as a function of $n$ and $k$, and simplifying the resulting expression, we obtain:

\begin{theorem}
\label{thm:find-smis}
The algorithm above takes time $O(n^{2\log_2n+3})$ to find the smallest missing induced subgraph of any $n$-vertex graph.
\end{theorem}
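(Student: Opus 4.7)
The plan is to substitute the cutoff from \cref{lem:cutoff} into the per-iteration cost and verify that everything fits inside $O(n^{2\log_2 n + 3})$. Concretely, the total runtime is $\sum_{k=2}^{k^\ast} O(kn^k + 2^{k(k-1)/2})$ where $k^\ast \le 2\log_2 n + 2$. I would first observe that both summands grow superexponentially in $k$, so the whole sum lies within a constant factor of its last term, and it therefore suffices to evaluate the bound at $k = 2\log_2 n + 2$.

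The next step is arithmetic bookkeeping. For the adjacency-matrix term, expanding $k(k-1)/2$ at $k = 2\log_2 n + 2$ gives $2(\log_2 n)^2 + 3\log_2 n + 1$, so $2^{k(k-1)/2} = 2 \cdot n^{2\log_2 n} \cdot n^{3} = O(n^{2\log_2 n + 3})$. For the tuple-enumeration term, $kn^k = (2\log_2 n + 2)\, n^{2\log_2 n + 2}$; the leading factor of $2\log_2 n + 2$ is absorbed into the exponent using the trivial bound $2\log_2 n + 2 \le n$ for sufficiently large $n$, again yielding $O(n^{2\log_2 n + 3})$. Adding the two contributions (and noting the earlier iterations contribute only a constant-factor blowup) yields the claimed total.

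I do not anticipate any real obstacle: the proof is essentially a one-line substitution combined with the geometric-type blowup that lets us ignore all but the last iteration. The only point requiring care is verifying that the exponent constant is exactly $3$ and not something larger like $4$, which the explicit expansion of $k(k-1)/2$ above pins down.
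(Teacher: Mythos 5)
Your proof is correct and takes essentially the same route the paper intends: the paper's ``proof'' of \cref{thm:find-smis} is just the sentence preceding it, which says to plug the cutoff $k \le 2\log_2 n + 2$ from \cref{lem:cutoff} into the per-iteration bound $O(kn^k + 2^{k(k-1)/2})$ and simplify, after noting that the superexponential growth in $k$ makes the last iteration dominate. Your explicit expansion $k(k-1)/2 = 2(\log_2 n)^2 + 3\log_2 n + 1$ and the absorption of the factor $k$ into one extra power of $n$ are exactly the omitted bookkeeping.
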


The bottleneck in this time bound is initializing and searching the table of counters. When $k=2\log_2n+2$, the size of this table is sufficiently larger than $n^k$ to make this step slower than the search through all $k$-tuples of vertices. The time can be reduced to $O(n^{2\log_2n+2}\log n)$, at the expense of randomization, by replacing the table of counters by a hash table; we omit the details.

The copies of any single $k$-vertex subgraph type can be counted in the substantially smaller time bound $O(n^{0.174k+o(k)})$~\cite{CurDelMar-STOC-17} but multiplying this bound by the number of distinct subgraph types would lead to a slower algorithm than the one above. Although there has been research on applying fast matrix multiplication to the induced subgraph counting problem for specific small values of $k$~\cite{KloKraMul-IPL-00,KowLinLun-SIDMA-13}, using sparsity parameters of the host graph to reduce the time for subgraph counting~\cite{EppSpi-JGAA-12,EppGooStr-TCS-12},
or using the structural features of specific types of subgraphs to speed up algorithms for counting only those subgraphs~\cite{WilWil-SICOMP-13}, we do not know of significant general speedups to the naïve counting algorithm above based on these ideas. Indeed, to make any significant improvement based on counting methods we would need not only algorithmic ideas, but a better bound than \cref{cor:smis-size} on the smallest $k$ for which success is guaranteed or a counting algorithm which takes less than $O(1)$ time per possible subgraph of size $k$. This is because, for the value of $k$ given by \cref{cor:smis-size}, our time bound is not significantly larger than what would be required merely to output all subgraph counts.  

\section{Lower bounds}

Our lower bound on the smallest missing induced subgraph follows from the hardness of finding cliques of logarithmic size, a special case of known results on the parameterized complexity of the clique problem~\cite{CheChoFel-IC-05,CheHuaKan-JCSS-06,LokMarSau-BEATCS-11}. We spell out the proof to clarify its applicability to the range of parameter values that we need. We follow the proof outline from Lokshtanov et al~\cite{LokMarSau-BEATCS-11}.

\begin{lemma}
\label{lem:log-clique}
Under the Exponential Time Hypothesis, for any constant $c>0$, it is impossible to find a maximum clique, in an $n$-vertex graph for which that clique is guaranteed to have size at most $c\log n$, in time $n^{o(\log n)}$.
\end{lemma}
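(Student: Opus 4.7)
The plan is to start from the Exponential Time Hypothesis on $3$-SAT, reduce linearly to $3$-COLORING, and then apply the standard partitioned-clique construction (the Chen--Huang--Kanj line of argument, following the survey of Lokshtanov, Marx, and Saurabh), tuning the number of blocks so that the target clique has size roughly $c\log n$.

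First, I would invoke the sparsification lemma of Impagliazzo, Paturi, and Zane: under ETH, $3$-SAT with $N$ variables and $O(N)$ clauses admits no $2^{o(N)}$-time algorithm. Composing with the usual linear-size reduction from $3$-SAT to $3$-COLORING yields a graph $G$ on $N'=\Theta(N)$ vertices whose $3$-colorability cannot be decided in time $2^{o(N')}$.

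Next, given $G$ and a parameter $k$ to be fixed later, partition $V(G)$ into $k$ blocks $B_1,\dots,B_k$ of size at most $\lceil N'/k\rceil$, and form the auxiliary graph $H$ whose vertices are the pairs $(i,\alpha_i)$ with $\alpha_i$ a valid $3$-coloring of $G[B_i]$, and in which $(i,\alpha_i)$ is adjacent to $(j,\alpha_j)$ (for $i\ne j$) exactly when $\alpha_i\cup\alpha_j$ is a valid $3$-coloring of $G[B_i\cup B_j]$. Because $H$ is $k$-partite, every clique in $H$ has at most $k$ vertices; and a $k$-clique in $H$ selects one coloring per block whose pairwise unions are consistent on every edge of $G$, i.e., it encodes a valid $3$-coloring of $G$. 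Thus $G$ is $3$-colorable iff $H$ contains a $k$-clique, and $H$ has at most $n := k\cdot 3^{N'/k}$ vertices and is constructible in $\mathrm{poly}(n)$ time.

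Finally, choose $k := \lfloor\sqrt{c\,N'\log_2 3}\rfloor$. Then
\[
\log_2 n \;=\; \log_2 k + (N'/k)\log_2 3 \;=\; \Theta(\sqrt{N'}),
\]
and a short calculation shows $k \le c\log_2 n$ for $N'$ sufficiently large (padding $H$ with isolated vertices handles any residual slack). The hypothetical $n^{o(\log n)}$-time clique algorithm would then decide $3$-colorability of $G$ in time $n^{o(\log n)} = 2^{\Theta(\sqrt{N'})\cdot o(\sqrt{N'})} = 2^{o(N')}$, contradicting ETH. The main obstacle is purely arithmetic: one must verify that the $\log k$ correction term in the expression for $\log_2 n$ is a lower-order perturbation of $(N'/k)\log_2 3$, so that the guarantee ``max clique at most $c\log n$'' really holds for $H$ and so that the final time bound collapses cleanly to $2^{o(N')}$; the choice $k = \Theta(\sqrt{N'})$ makes both checks immediate.
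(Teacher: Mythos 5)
Your proof is correct and takes essentially the same route as the paper: reduce $3$-coloring to $k$-clique on the block-coloring graph $H$, with block count $k=\Theta(\sqrt{N'})$ tuned so that $k\le c\log n$, and conclude that an $n^{o(\log n)}$-time clique algorithm would give a $2^{o(N')}$-time $3$-coloring algorithm, contradicting ETH. The only differences are cosmetic: you make explicit the detour from $3$-SAT through the sparsification lemma to sparse $3$-coloring (which the paper leaves implicit when it reduces from $3$-coloring of $N$-vertex graphs), and you fix the constant in $k=\Theta(\sqrt{N'})$ explicitly rather than introducing a tunable parameter $\delta$.
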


\begin{proof}
We reduce 3-coloring of $N$-vertex graphs to clique-finding on exponentially-larger graphs, as follows. Choose a parameter $\delta>0$, and partition the given $N$-vertex graph $G$ into $t=\lfloor\delta\sqrt N\rfloor$ subsets of at most $\lceil N/t\rceil$ vertices. Form a graph $H$ representing  the possible 3-colorings of each subset, and add an edge in $H$ between pairs of compatible subset 3-colorings. (Two subset 3-colorings are compatible if they represent proper colorings of the induced subgraphs of different subsets, and they do not combine to form any monochromatic edge.) Then cliques of size $t$ in $H$ correspond to proper 3-colorings of $G$. The number $n$ of vertices in $H$ is $O(t3^{N/t})$, so $t=\Theta(\log n)$, with a constant of proportionality that can be made arbitrarily small by choosing a sufficiently small value of $\delta$, meeting the guarantee of the lemma. If we could find a $t$-clique in time $n^{o(\log n)}$, this would give us a 3-coloring algorithm with time $(t3^{N/t})^{o(t)}=3^{o(N)}$, contradicting the exponential-time hypothesis.
\end{proof}

To reduce from clique-finding to missing subgraphs, we use a construction for a family of graphs in which the unique smallest missing subgraph is a clique:

\begin{figure}[t]
\centering\includegraphics[width=0.6\textwidth]{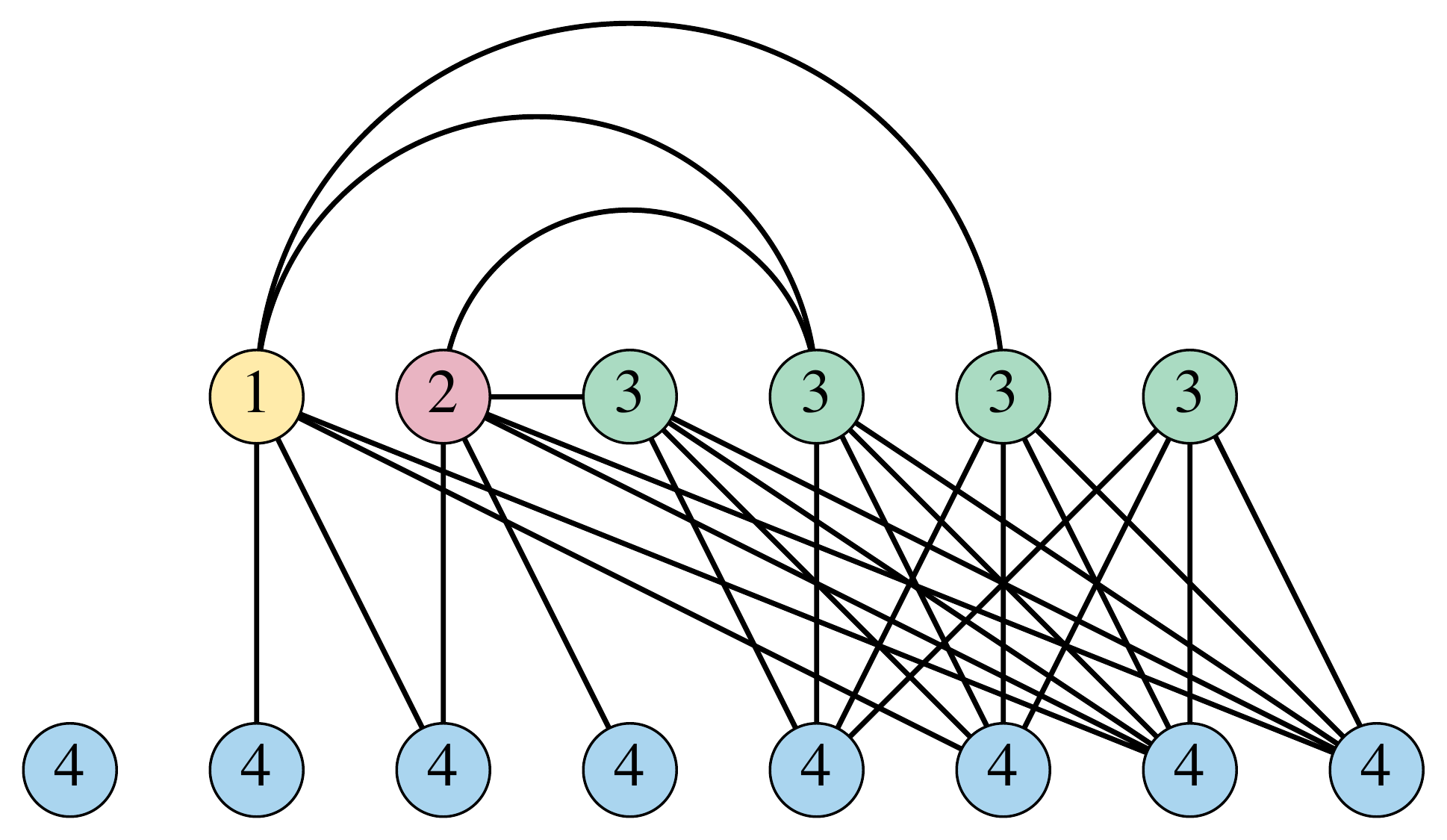}
\caption{The labeled graph $X_4$ of \cref{lem:all-but-clique}. Every graph with four or fewer vertices exists as an induced subgraph of $X_4$, except for a clique of size four. More strongly, whenever $H$ is a four-vertex graph $H$ labeled by the four numbers 1, 2, 3, 4 that does not have an edge labeled 1--2, $H$ appears as a labeled induced subgraph of $X_4$.}
\label{fig:all-but-clique}
\end{figure}

\begin{lemma}
\label{lem:all-but-clique}
For each positive integer $i>1$, there exists a graph $X_i$ that does not contain a clique of size $i$, but contains all other $i$-vertex induced subgraphs, and has a total of $2^i-2$ vertices.
\end{lemma}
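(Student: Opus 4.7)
The plan is to give an explicit construction of $X_i$ and then verify both properties directly. Take the vertex set of $X_i$ to be the family of nonempty proper subsets of $[i] = \{1, \ldots, i\}$, which has exactly $2^i - 2$ elements. Declare two distinct vertices $S, T$ to be adjacent whenever $\min(S) \neq \min(T)$ and the larger of these two minimums lies in $S \cap T$; equivalently, when $\min(S) < \min(T)$, put $S \sim T$ iff $\min(T) \in S$.

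For the clique bound, suppose for contradiction that $S_1, \ldots, S_i$ form a clique in $X_i$. Pairwise adjacency forces all the values $\min(S_\ell)$ to be distinct, so after relabeling I may assume $\min(S_\ell) = \ell$. Proceeding by induction on $\ell$ from $i$ downward, I would show $S_\ell = \{\ell, \ell+1, \ldots, i\}$: the minimum condition already gives $S_\ell \subseteq \{\ell, \ldots, i\}$, while adjacency with each $S_{\ell'}$ for $\ell' > \ell$ forces $\ell' \in S_\ell$. At $\ell = 1$ this yields $S_1 = [i]$, contradicting the requirement that vertices be proper subsets.

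For the embedding, let $H$ be any $i$-vertex graph other than $K_i$. Pick a non-edge of $H$ and relabel its vertices so that $\{1, 2\}$ is this non-edge; then set
\[
v_j = \{j\} \cup \{k > j : jk \in E(H)\} \quad \text{for each } j \in [i].
\]
The three things to check are: $\min(v_j) = j$, so the $v_j$ are pairwise distinct nonempty subsets; each $v_j$ is proper, because $v_j \subseteq \{j, \ldots, i\}$ already handles $j \geq 2$, and $v_1 \neq [i]$ since $12 \notin E(H)$ forces $2 \notin v_1$; and for $j < k$ the adjacency test between $v_j$ and $v_k$ reduces to ``$k \in v_j$,'' which holds iff $jk \in E(H)$. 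Consequently the induced subgraph of $X_i$ on $\{v_1, \ldots, v_i\}$ is isomorphic to $H$, via the labeled isomorphism $j \mapsto v_j$.

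The main obstacle is designing the adjacency rule itself. The tempting choice ``$S \sim T$ iff one of $S, T$ is a proper subset of the other'' would make $X_i$ a comparability graph, and hence unable to contain non-comparability graphs such as $C_5$ once $i \geq 5$. Keying adjacency off the minimum element breaks the transitive orientation while keeping the longest clique at length $i-1$ (achieved only by the chain of tails $\{\ell, \ldots, i\}$ for $\ell \geq 2$), which is what makes both halves of the argument above go through.
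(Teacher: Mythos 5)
Your construction is correct and is essentially the same as the paper's, just presented directly rather than inductively: you realize each vertex as a nonempty proper subset of $[i]$ with adjacency keyed to the minimum element, whereas the paper builds the graph level by level with vertices labeled $1,\dots,i$ and adjacency keyed to an explicitly stored "backward-neighbor" set; under the index reversal $j\mapsto i+1-j$ the two become the same idea, and the embedding of an arbitrary non-complete $H$ (send vertex $j$ to the set recording its forward neighbors) matches the paper's inductive embedding step. Your clique argument — forcing the chain of tail sets and deriving the contradiction $S_1=[i]$ — is a slightly different but equally valid route to the paper's observation that labels $1$ and $2$ cannot both appear in a clique.
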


\begin{proof}
More strongly, we form a family of labeled graphs $X_i$ with the following properties:
\begin{itemize}
\item $X_i$ has its vertices labeled by the numbers from $1$ to $i$.  It has one vertex labeled 1 and one vertex labled 2, and $2^{j-1}$ vertices labeled $j$ for $j>2$.
\item The vertices having the same label form an independent set, so each clique must have vertices of distinct labels.
\item There is no edge labeled 1--2, so it is not possible for both of these labels to appear in a clique. Therefore, the $i$-vertex clique is a missing induced subgraph.
\item If $H$ is any $i$-vertex graph $H$ labeled by distinct integers from $1$ to $i$, such that $H$ does not have an edge labeled 1--2, then $H$ appears with the same labels as an induced subgraph of $X_i$.
\end{itemize}
Every $i$-vertex graph that is not a clique can be labeled  by distinct integers from $1$ to $i$ so that its vertices labeled 1 and 2 are non-adjacent; therefore, for a graph $X_i$ with the properties listed above, every $i$-vertex graph that is not a clique appears as an induced subgraph, and the $i$-vertex clique is the unique smallest missing induced subgraph.

We construct the graphs $X_i$ inductively, as illustrated in \cref{fig:all-but-clique}. As a base case, for $i=2$, let $X_2$ consist of two isolated vertices, one with labeled $1$ and the other labeled $2$. For ease of notation call the set of all vertices labeled $i$ $T_i$. To form $X_i$ from $X_{i-1}$ add $2^{i-1}$ vertices, labeled $i$, one for each subset of $\{1,2,\dots,i-1\}$.
For a vertex $v$ labeled $i$ corresponding to the subset $S \in \{1,2,\dots,i-1\}$, add edges between $v$ and $u$ for all $u \in T_\ell$ for all $\ell \in S$. That is, connect $v$ to all verticies $u$ that have a label in $S$.

The resulting graph clearly has the first three properties listed above. To show that it has the last property, let $H$ be any $i$-vertex graph $H$ labeled by distinct integers from $1$ to $i$, such that $H$ does not have an edge labeled 1--2. Let $H^{-}$ be obtained from $H$ by removing the vertex labeled $i$, and let $S$ be the set of labels of neighbors of the removed vertex. Then by induction, $H^{-}$ appears as an induced subgraph of $X_{i-1}$, with the same labels. To form $H$ as an induced subgraph of $X_i$, add one more vertex to this copy of $H^{-}$, the vertex with label $i$ that corresponds to set $S$.
\end{proof}

Noga Alon (personal communication) has observed that a randomized construction from his work on induced-universal graphs~\cite{Alo-GFA-17} can be modified to produce smaller graphs with the same properties as in \cref{lem:all-but-clique}, with $2^{(i-1)/2}\bigl(1+o(1)\bigr)$ vertices. This size bound is within a lower-order additive term of optimal. However, for our purposes it is more important to have a deterministic construction than to optimize the exponent.

\begin{theorem}
\label{thm:eth-hard}
Under the exponential-time hypothesis, it is impossible to find the smallest missing induced subgraph of every $n$-vertex graph in time $n^{o(\log n)}$.
\end{theorem}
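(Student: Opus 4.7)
The plan is to reduce the problem of finding a maximum clique in an $N$-vertex graph whose clique number is guaranteed to be at most $c\log N$ --- hard by \cref{lem:log-clique} --- to computing the smallest missing induced subgraph, using the graphs $X_k$ of \cref{lem:all-but-clique} as a ``planting'' gadget that supplies every non-clique pattern for free.

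Given such a graph $G$ and a target size $k\le c\log N$, I would form $G_k = G \sqcup X_k$, the disjoint union. By \cref{lem:all-but-clique}, $X_k$ contains every $k$-vertex graph other than $K_k$; it also contains every $j$-vertex graph for $j<k$, since any such graph is an induced subgraph of the $k$-vertex non-clique obtained by adding $k-j$ isolated vertices, which lies in $X_k$. A disjoint union creates no new cliques, so $G_k$ contains $K_k$ if and only if $G$ does. Therefore the smallest missing induced subgraph of $G_k$ has size exactly $k$ and is $K_k$ precisely when $G$ contains no $k$-clique; otherwise it has size strictly greater than $k$. A single SMIS query on $G_k$ thus decides the $k$-clique problem on $G$.

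From this decision oracle I would recover an actual maximum clique via two standard steps: first binary-search for the clique number $k^* \in [2,c\log N]$ using $O(\log\log N)$ SMIS queries, then apply the usual self-reduction, deleting each vertex $v$ in turn if $(G\setminus v) \sqcup X_{k^*}$ still has SMIS larger than $k^*$. This uses $O(N)$ SMIS queries in total, each on a graph of $n = N + 2^k - 2 = O(N^c)$ vertices, so $\log n = \Theta(\log N)$. If SMIS could be solved in time $n^{o(\log n)}$, the whole procedure would run in time $O(N)\cdot (N^c)^{o(\log N)} = N^{o(\log N)}$, contradicting \cref{lem:log-clique} for the constant $c$ we chose.

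The main thing that needs care is the claim that $X_k$ contains not only all $k$-vertex non-clique subgraphs but all smaller subgraphs too (handled by the ``add isolated vertices'' observation above), and the arithmetic turning an $o(\log n)$ exponent into an $o(\log N)$ exponent after the $n=\Theta(N^c)$ blow-up, which is straightforward because $c$ is a fixed constant so any sublogarithmic function of $\log n$ remains sublogarithmic in $\log N$. Everything else is routine bookkeeping.
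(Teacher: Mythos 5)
Your proof is correct and follows the same route as the paper: reduce from \cref{lem:log-clique} by querying the smallest missing induced subgraph on the disjoint union $G \sqcup X_k$, using the gadget $X_k$ from \cref{lem:all-but-clique} to guarantee that $K_k$ is the only possible missing graph on at most $k$ vertices, and then noting that the vertex blow-up is polynomial so an $n^{o(\log n)}$ SMIS algorithm would give an $N^{o(\log N)}$ clique algorithm. The paper's version is a bit more streamlined---it simply loops $i = 2, 3, \dots$ until the answer is the $i$-clique and reports $i-1$, skipping the binary search and vertex-deletion self-reduction---but your extra steps are harmless within the budget, and your explicit remark that $X_k$ also contains every graph on fewer than $k$ vertices (by padding with isolated vertices) correctly fills in a detail the paper leaves implicit.
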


\begin{proof}
Let $X_i$ denote the family of graphs constructed by \cref{lem:all-but-clique}.
We can use an algorithm for the smallest missing induced subgraph to find a maximum clique of size up to $\log_2 n$, in a graph $G$ for which the maximum clique has that size, by successively finding the smallest missing induced subgraph in the disjoint union of graphs $G\cup X_i$ for $i=2,3,\dots$ until finding an $i$ for which this missing subgraph is an $i$-vertex clique. The maximum clique in $G$ has size one less than this stopping value of $i$. The addition of $X_i$ to $G$ at most doubles the number of vertices, which does not change the form of a time bound of form $n^{o(\log n)}$, and the loop through successive values of $i$ also does not change this form. Therefore, an algorithm for the smallest missing induced subgraph with time $n^{o(\log n)}$ would contradict \cref{lem:log-clique}.
\end{proof}

\section{Restricted missing subgraphs}

A natural extension of the smallest missing induced subgraph would be to search for the smallest missing induced subgraph from a given graph family $\mathcal{F}$, either in a larger graph from the same family or more generally. When $\mathcal{F}$ is totally ordered by the induced subgraph relation, this is equivalent to finding the largest induced subgraph in $\mathcal{F}$ in a given graph $G$; for instance, the classical problems of finding the largest clique, largest independent set, and longest induced path in a graph can all be expressed as a missing induced subgraph problem in this way.

For general graph families $\mathcal{F}$, the size of smallest missing induced subgraph in $\mathcal{F}$, in a graph~$G$, is one plus the largest $k$ for which $G$ is \emph{$k$-induced-universal}, meaning that it contains all $k$-vertex graphs in $\mathcal{F}$ as induced subgraphs. A universal graph in this sense is not required to belong to $\mathcal{F}$ itself. Universality is equivalent to the existence of an \emph{adjacency labeling scheme}, an assignment of binary labels to the vertices of graphs in $\mathcal{F}$ such that the adjacency of two vertices can be determined from their labels. If a $k$-universal graph exists, and has $n_k$ vertices, the identities of its vertices can be used as labels in an adjacency labeling scheme with $\lceil\log_2 n\rceil$ bits per label. Conversely, if an adjacency labeling scheme with $b$ bits per label is possible, then one can construct a $k$-universal graph with $2^b$ vertices, one for each possible label, with adjacency determined according to the labeling scheme. The computational problem of finding the smallest missing induced subgraph from $\mathcal{F}$ in $G$, then, is the same as determining the $k$ for which $G$ is $k$-universal. If the smallest missing subgraph from $\mathcal{F}$ in $G$ is of size $k+1$ then all subgraphs of size $k$ appear $G$ making $G$ $k$-universal (for the family $\mathcal{F}$). Thus, known upper bounds on the size of universal graphs translate into lower bounds on the possible size of the smallest missing subgraph. For instance, trees have universal graphs of linear size, from which it follows that the smallest missing tree, in a given graph, may also have linear size~\cite{Chu-JGT-90}. Translating other known results on universal graphs into this form, it follows that the smallest missing induced subgraph of maximum degree $\Delta$ can have size $\Theta(n^{2/\Delta})$~\cite{AloNen-MPCPS-19} and that the smallest missing planar induced subgraph can have size $n^{1-o(1)}$~\cite{DujEspGav-JACM-21}.

Additional problems of this type arise when both the induced subgraph and $G$ are restricted to belong to the same family $\mathcal{F}$. For instance, for the smallest missing induced bipartite subgraph of a given bipartite graph, a counting argument similar to the one we used for general graphs shows that this subgraph has size $O(\log n)$, from which it follows that we can find this subgraph by a brute-force search in quasipolynomial time. However, our ability to prove the optimality of this time bound is limited in two ways: we do not have a construction for a bipartite graph in which the unique smallest missing induced bipartite subgraph is a biclique, analogous to our construction for cliques, and we do not know ETH-tight bounds on the complexity of finding bicliques in bipartite graphs~\cite{Lin-JACM-18}.

Another interesting case is the problem of finding the smallest missing induced planar subgraph, in a given planar graph. Like the unconstrained smallest missing induced subgraph, the smallest planar missing induced subgraph in a planar graph has logarithmic size. However, in this case the search for the subgraph can take advantage of planarity, speeding up the overall algorithm to polynomial time. In the remainder of this section we detail this result.

\begin{lemma}
\label{lem:planar-log}
The smallest missing induced planar subgraph, in a given $n$-vertex planar graph, has $O(\log n)$ vertices.
\end{lemma}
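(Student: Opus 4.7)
The plan is a pigeonhole counting argument. I aim to exhibit $k = O(\log n)$ for which the number of pairwise non-isomorphic connected planar graphs on $k$ vertices strictly exceeds the number of isomorphism classes of $k$-vertex connected induced subgraphs of $G$; for such $k$, some connected planar $k$-vertex graph must be a missing induced subgraph of $G$.

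For the lower bound on planar graphs, I would appeal to the classical asymptotic enumeration of Gim\'enez and Noy: the number of non-isomorphic connected planar graphs on $k$ vertices is $\Theta(\gamma^k / k^{7/2})$ with $\gamma \approx 27.23$. Any exponential lower bound of the form $2^{\Omega(k)}$ is enough, so a crude bound (e.g., counting just the $k$-vertex trees, which already number $\Omega(2.95^k)$) could substitute.

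For the upper bound, I would exploit the sparsity of $G$: being planar, $G$ is $5$-degenerate with at most $3n-6$ edges. The target is a bound of the form $n \cdot C^k$ on the number of isomorphism classes of connected $k$-vertex induced subgraphs of $G$, for some absolute constant $C$ with $C < \gamma$. The natural route is a branching argument along a $5$-degeneracy ordering of $V(G)$, assigning to each iso class a canonical representative (for instance, the lexicographically smallest subset inducing a graph of that class under the ordering) and bounding the number of canonical subsets of size $k$ by growing them vertex-by-vertex in the ordering, with bounded branching at each step.

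Combining, for $k \ge c \log_2 n$ with any constant $c > 1/\log_2(\gamma/C)$ one obtains $n \cdot C^k < \gamma^k / k^{7/2}$, so some connected planar $k$-vertex graph fails to appear as an induced subgraph of $G$, proving the $O(\log n)$ bound. The main obstacle is the second step: since the raw number of $k$-vertex subsets inducing a connected subgraph can be as large as $\binom{n-1}{k-1}$ (as witnessed by $K_{1,n-1}$, where a single iso class $K_{1,k-1}$ accounts for every connected $k$-subset containing the center), the argument must operate at the level of iso classes rather than of subsets. This demands a canonicalization of connected subgraphs tailored to planar structure, and is the delicate combinatorial heart of the proof; the gap between $\gamma \approx 27$ and the small constants produced by degeneracy-based branching arguments leaves comfortable slack for the final comparison.
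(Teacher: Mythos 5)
Your high-level plan---pigeonhole an exponentially growing family of planar graphs against a polynomially growing count of induced-subgraph isomorphism classes in $G$---is the same strategy the paper uses. But the step you yourself flag as ``the delicate combinatorial heart of the proof'' is in fact the entire content of the lemma, and you have not supplied it. The claim that the number of isomorphism classes of $k$-vertex connected induced subgraphs of an $n$-vertex planar graph is at most $n\cdot C^k$ for some constant $C<\gamma$ is not established by gesturing at a degeneracy ordering: degeneracy gives bounded \emph{back}-degree, not bounded forward branching, so after one step you can be sitting at a vertex with $\Omega(n)$ later neighbors (your own $K_{1,n-1}$ example), and a lexicographic-canonical-representative scheme doesn't automatically prune this to a constant---making that precise is exactly the missing argument. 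Moreover, even if some bound of the form $n\cdot C^k$ holds, you would also owe a proof that $C$ can be taken strictly below $\gamma$, which a trivial ``at most all $k$-vertex planar graphs'' bound does not give you.

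The paper sidesteps all of this by not trying to bound iso classes of arbitrary (or connected) induced planar subgraphs at all. Instead it restricts attention to the subfamily of $4$-connected planar triangulations. Two facts do the work: (i) the number of unlabeled $k$-vertex $4$-connected triangulations grows like $c^k$ (Tutte); and (ii) any induced $4$-connected maximal planar subgraph of $G$ must be, as a vertex set, one of the $O(n)$ pieces obtained by cutting $G$ along its separating triangles---because separating triangles form a nested family, they cannot cut through the interior of a $4$-connected piece, and each face of such an induced triangulation is either a face of $G$ or itself a separating triangle. So there are only $O(n)$ such induced subgraphs of $G$ in total (of all sizes), a far sharper bound than $n\cdot C^k$, and pigeonhole immediately forces $c^k = O(n)$, i.e.\ $k = O(\log n)$. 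The lesson is that choosing the right restricted subfamily lets the structure of planarity (nested triangle separators) replace the generic canonicalization/branching machinery you were reaching for, and that machinery is the part of your proposal that would need to be built from scratch.
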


\begin{proof}
The high level idea is to compare the exponential number of 4-connected planar triangulations (that is, maximal planar graphs) of size $k$, on the one hand, and an $O(n)$ bound on the number of 4-connected triangulations that can exist as induced subgraphs of a given $n$-vertex graph, on the other hand. When all $k$-vertex 4-connected planar triangulations are present in a graph, their exponential number must be $O(n)$, from which it follows that $k$ must be $O(\log n)$. Although the smallest missing induced subgraph might not be 4-connected or maximal planar, it has at most $k$ vertices.

In more detail, the number of 4-connected planar triangulations on $k$ unlabeled vertices can be upper-bounded by $c^k$ for some constant~$c$~\cite{Tut-CJM-62,Rat-JCTB-74}. However, a given planar graph $G$, with $n$ vertices, can only have $O(n)$ induced subgraphs that are 4-connected and maximal planar. This is because each such subgraph is one of the linearly many graphs formed when $G$ is partitioned along its separating triangles, which necessarily form a nested family of separations~\cite{EppRee-SODA-19}. This partition cannot subdivide any 4-connected triangulation, because if it could be subdivided in this way it would not be 4-connected. Conversely, if $G$ contains a 4-connected triangulation as an induced subgraph, each face of the triangulation must either be a face of $G$ or a separating triangle, so this triangulation will be separated from all the other vertices of $G$ by the partition of $G$ on its separating triangles. Putting the two bounds on the number of $k$-vertex triangulations and on the number of induced 4-connected triangulations together, if $G$ is an $n$-vertex graph for which all $k$-vertex planar graphs are present as induced subgraphs, it must be the case that $c^k=O(n)$. Taking logarithms of both sides shows that $k=O(\log n)$. For the induced subgraphs of any larger size, at least one is missing.
\end{proof}

Although we do not need it, we note that conversely there exist planar graphs whose smallest missing induced subgraph has size $\Omega(\log n)$. This follows from the fact that the number of $k$-vertex unlabeled planar graphs of all types is only singly exponential in $k$~\cite{GimNoy-JAMS-09}. An $n$-vertex planar graph whose smallest missing planar induced subgraph is logarithmic can be constructed by choosing the largest $k$ such that the disjoint union of all $k$-vertex planar graphs has at most $n$ vertices, and then padding this disjoint union to exactly $n$ vertices by adding isolated vertices.

\begin{theorem}
\label{thm:planar-missing}
The smallest missing induced planar subgraph, in a given $n$-vertex planar graph $G$, can be found in time polynomial in~$n$.
\end{theorem}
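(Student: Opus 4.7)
The plan combines two facts: by \cref{lem:planar-log} the answer $k$ is $O(\log n)$, and the Gim\'enez--Noy enumeration cited above gives $c^k$ unlabeled planar graphs on $k$ vertices for some constant~$c$, which is $n^{O(1)}$ when $k = O(\log n)$. So brute-force enumeration of candidate missing subgraphs is polynomially affordable. Concretely, I would iterate $k = 2, 3, \ldots$, at each $k$ enumerate all $k$-vertex planar graphs in canonical form, test each candidate $H$ for induced containment in $G$, and return the first missing one. The outer loop stops at $k = O(\log n)$, so provided the induced-subgraph test runs in polynomial time, the whole algorithm is polynomial in $n$.

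For the test, I would exploit the planarity of the host via the separating-triangle decomposition used in \cref{lem:planar-log}: $G$ is expressed as a tree $T$ on $O(n)$ pieces, each piece either $4$-connected or trivially small, with neighboring pieces sharing a separating triangle of three vertices. Any induced copy of $H$ in $G$ meets each separating triangle in at most three vertices, so the copy is specified by, for each edge of $T$, which $\le 3$ vertices of $H$ cross the corresponding triangle and where they map in $G$. A bottom-up dynamic program on $T$ would track this: the state at each node records a subset $S \subseteq V(H)$ claimed to be realized in the subtree, together with the mapping of $S$'s boundary vertices into the parent separating triangle. This yields $2^{|V(H)|} \cdot n^{O(1)} = n^{O(1)}$ states per node, and compatibility across a parent--child pair, including verification of the required non-edges of $H$, is checked by direct lookups in the adjacency of $G$.

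The main obstacle I anticipate is the base case of the dynamic program, in which a subset $S \subseteq V(H)$ must be realized inside a single $4$-connected piece $P$ compatibly with a prescribed embedding of $S$'s boundary vertices into $P$'s incident separating triangles. A na\"ive search over $\binom{|P|}{|S|}$ placements is quasipolynomial, so this step needs its own structural argument. Two directions I would try are (a) further decomposing $P$ along higher-order separators, exploiting that $P$ is $4$-connected and hence its separator structure is more controlled than an arbitrary planar graph's, or (b) leveraging that the pattern $H$ is planar of size $O(\log n)$ and therefore has treewidth $O(\sqrt{\log n})$, invoking a planarity-aware induced subgraph isomorphism subroutine whose running time would be $n^{O(1)}$ for sublogarithmic pattern treewidth. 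Making this base case polynomial, while propagating $H$'s non-edge constraints correctly across separating triangles in the outer dynamic program, is where I expect the bulk of the technical work to lie.
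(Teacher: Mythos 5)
Your outer loop matches the paper's proof exactly: iterate $k$, enumerate the $c^k = n^{O(1)}$ unlabeled $k$-vertex planar graphs (Gim\'enez--Noy or, as the paper cites for the enumeration itself, Brinkmann--McKay and Bodirsky et al.), test each candidate for induced containment in $G$, and return the first miss, with $k = O(\log n)$ by \cref{lem:planar-log}. The divergence, and the genuine gap in your write-up, is the induced-subgraph test. You propose to build one from scratch via a dynamic program over the separating-triangle tree, explicitly leave the base case (realizing a sub-pattern inside a single $4$-connected piece) unresolved, and call it the place where ``the bulk of the technical work'' lies. That means your proof, as written, is incomplete. The paper avoids this entirely: it cites the known planar subgraph/induced-subgraph isomorphism algorithm of Eppstein, improved by Dorn and Bonsma, which tests whether a $k$-vertex planar $H$ is an induced subgraph of an $n$-vertex planar $G$ in time $2^{O(k)}\,n$, hence $n^{O(1)}$ once $k = O(\log n)$. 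This is essentially your direction (b) --- invoke an off-the-shelf planarity-aware subroutine --- and the result you need is citable, so there is no technical work left to do.

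Two smaller corrections. First, the separating-triangle decomposition plays no algorithmic role in the paper; it is used only inside the proof of \cref{lem:planar-log} to bound $k$. The $2^{O(k)}n$ subgraph-isomorphism algorithms instead use a Baker-style argument: a connected $k$-vertex pattern occurring in planar $G$ lies within $k$ BFS layers of some vertex, such a strip has treewidth $O(k)$, and one runs a dynamic program over its tree decomposition; disconnected patterns need extra care but are handled in the cited works. Your direction (a), a further decomposition of a $4$-connected piece by higher-order separators, is not how these algorithms proceed and would still leave you proving a base case. Second, your direction (b) slightly mislocates the parameter: you don't need the pattern's treewidth to be sublogarithmic to get polynomial time; the running time of the subroutine is $2^{O(k)}n$ in the pattern size $k$ (its planarity and hence $O(\sqrt{k})$ treewidth is what makes that bound achievable), and $k = O(\log n)$ already makes $2^{O(k)} = n^{O(1)}$.
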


\begin{proof}
For each positive integer $k$, in numerical order, we list all $k$-vertex planar graphs and test for each one whether it is an induced subgraph of $G$. The $k$-vertex unlabeled planar graphs can be listed in total time exponential in $k$ using standard planar graph enumeration methods~\cite{BriMcK-MATCH-07,BosHur-CG-09}. An algorithm of Eppstein~\cite{Epp-JGAA-99}, subsequently improved by Dorn~\cite{Dor-STACS-10} and Bonsma~\cite{Bon-STACS-12}, allows testing whether a graph $H$ of size $k$  is a subgraph or induced subgraph of a graph $G$ of size $n$, in time $2^{O(k)}n$. Because we only apply these algorithms for $k=O(\log n)$, the total time bound is polynomial in~$n$.
\end{proof}

The constant factor in the logarithmic bound on the size of the smallest missing induced planar subgraph that can be obtained from our proof of \cref{lem:planar-log} is unlikely to be tight, and the base of the exponential function in the $2^{O(k)}n$ time bound for planar subgraph isomorphism is large (approximately $76^k$ in Bonsma's version). For this reason we have not analyzed the time bound of \cref{thm:planar-missing} more carefully to determine its exact exponent.

\section{Conclusions and Related Problems}

We have shown that, to find the smallest missing induced subgraph of a given $n$-vertex graph, it is possible to use an algorithm with running time $n^{O(\log n)}$, and that under standard assumptions no better time bound is possible. We have also investigated similar problems where the induced subgraph or the given graph must belong to a more restricted family of graphs.

More generally, the problem of asking for the smallest missing subobject of an object can be extended to many other inclusion orderings of other types of objects. For substrings or subsequences, it is not difficult to solve in polynomial time. A more interesting example of the same type of problem arises for permutations and permutation patterns, smaller permutations with the same ordering as a subsequence of a larger permutation. A permutation that contains all patterns of length up to some value $k$ is known as a \emph{superpattern}, and so the problem of finding the smallest missing permutation is equivalent to asking for what $k$ a given pattern is a superpattern. A counting argument shows that a $k$-superpattern must have length $\Omega(k^2)$, and superpatterns of this length are known~\cite{Arr-EJC-99,Mil-JCTA-09,ChrKwaSin-JCTA-21,EngVat-AMM-21}. Turning this around, any permutation of length $n$ must have a missing pattern of length $O(\sqrt n)$, from which it follows that this missing pattern can be found in time $n^{O(\sqrt n)}$. Is a time bound of this form tight?

\bibliographystyle{plainurl}
\bibliography{qpoly}
\end{document}